\newtheorem{de}{Definition}
\newtheorem{theo}[de]{Theorem}    
\newtheorem{lem}[de]{Lemma}
\newcommand{\Inte}[1][3]{ \int_{\mathbb{R}^{#1}} }
\newcommand{\add}{a_{\textrm{dd}}}
\newcommand{\m}{\textbf{m}}
\newcommand{\Cs}{C_{\textrm{Sob}}}
\newcommand{\nn}{\nonumber}
\newcommand{\dd}{\textrm{d}}
\title{Existence of minimizers in generalized Gross-Pitaevskii theory with the Lee-Huang-Yang correction}
\date{}
\author{Arnaud Triay}
\begin{document}
\begin{abstract}
	We study the dipolar Gross-Piteavskii functional with the Lee-Huang-Yang (LHY) correction term without trapping potential and in the regime where the dipole-dipole interaction dominates the repulsive short-range interaction. We show that, above a critical mass, the functional admits minimizers and we prove their regularity and exponential decay. We also estimate the critical mass in terms of the parameters of the system.
\end{abstract}

\maketitle
\setcounter{tocdepth}{1}
\tableofcontents
\section{Introduction}

Recent progress in the realization of Bose-Einstein condensates (BEC) of \emph{dipolar} atoms have exhibited new and complex phenomena as compared with simpler chemical elements \cite{Kadau-16,Ferrier-16,Chomaz-16,Schmitt-16,Chomaz-18}. For a survey on the properties of dipolar BEC, see \cite{LahMenSanLewPfau-09}. This specificity is due to the long-range nature of the dipolar interaction, which persists in the dilute regime, as opposed to short-range ones which become delta like. The resulting model thus contains two competing interactions, one partly attractive and anisotropic, another one repulsive. The system is stable when the repulsive term compensates the attractiveness of the dipole-dipole interaction (DDI), or when the trapping potential is strongly confining in some well-chosen directions. Otherwise, the condensate experiences collapse and blow up, a phenomenon known as Bose-nova \cite{Santos-00,KocLahMet-05,Lahaye-08}. Nevertheless, new experiments have revealed that when the scattering length, which controls the scattering length of the repulsive interaction, is lowered sufficiently slowly, the condensate remains in a meta-stable state leading to the formation of stable self-bounded droplets \cite{Kadau-16,Ferrier-16,Chomaz-16,Schmitt-16}. This phenomenon has been investigated theoretically \cite{Lima-11,Lima-12} and numerically \cite{Bisset-16,Baillie-16,Wachtler-16}. The stabilization mechanism is believed to be caused by the Lee-Huang-Yang corrections \cite{LeeHuangYang-57,Petrov-15} and is accounted for in the Gross-Pitaevskii framework by a term proportional to $|\psi|^5$ where $\psi$ is the wave function of the condensate. Namely, the generalized Gross-Pitaevskii functional used in the physics literature \cite{Wachtler-16,Baillie-16,Schmitt-16} is given by the following
\begin{equation}
	\label{eq_GP_0}
	\mathcal{E}^{GP}_{\textrm{dip}}(\psi) = \Inte |\nabla \psi|^2 + \frac{a_s}{2} \Inte |\psi|^4 + \frac{\add}{2} \Inte K\star |\psi|^2 |\psi|^2 + \frac{2}{5}\gamma_{\textrm{QF}} \Inte |\psi|^5,
\end{equation}
where $\psi \in H^1(\mathbb{R}^3)$ is normalized as
\begin{equation}
	\label{eq:def_K_dip}
\int_{\mathbb{R}^3} |\psi|^2 = N,
\end{equation}
with $N$ the number of atoms.
The parameter $a_s$ is proportional to the scattering length, the parameter $\add$ is proportional to the square of the moment of the dipoles and we will always assume $\add > 0$. Finally, the coefficient $\gamma_{\textrm{QF}} > 0$ in front of the LHY corrections physically depends on $a_s$ and $\add$ but we take it independent of them for the analysis. The acronym QF stands for quantum fluctuation which is the term used in the physics literature to refer to the cause of the LHY corrections. For a dipolar Bose gas, we have $K = K_{\textrm{dip}}$ with
\begin{equation*}
K_{\textrm{dip}}(x) = \frac{3}{4\pi}\frac{1-3\cos^2(\theta_x)}{|x|^3} =: \frac{\Omega_{\textrm{dip}}(x/|x|)}{|x|^3},
\end{equation*}
where $\cos(\theta_x) = n\cdot x / |x|$ and where $n$ is a fixed unit vector aligned with all the dipoles. Here we consider a general long-range interaction of the fom
\begin{equation}
	\label{eq:def_K}
K(x) = \frac{\Omega(x/|x|)}{|x|^3} \quad \textrm{ with } \int_{\mathbb{S}^2} \Omega(y) \dd \sigma (y) = 0,
\end{equation}
where $\Omega$ is an even continuous function on the sphere $\mathbb{S}^2$ and $\sigma$ is the Haar measure on the sphere.

Solitary waves are solutions of  the following Gross-Pitaevskii equation
\begin{equation}
	\label{eq:EL_intro}
-\Delta \psi + a_s |\psi|^2 \psi + \add K \ast |\psi|^2 \psi + \gamma_{\textrm{QF}} |\psi|^3 \psi = \mu \psi,
\end{equation}
for some chemical potential $\mu \in \mathbb{R}$. They can be obtained by looking at critical points of $\mathcal{E}^{GP}_{\textrm{dip}}$ restricted to the unit  sphere in $L^2(\mathbb{R}^3)$. Of course, the easiest way to find critical points is to minimize the functional and look for the ground state. 

Without the LHY correction ($\gamma_{\textrm{QF}} =0$) and with a trapping potential $V_{\textrm{ext}}$, the functional $\mathcal{E}^{GP}_{\textrm{dip}}$ has been extensively studied \cite{BaoCaiWan-10,BaoAbdCai-12,CarMarkSpa-08,CarHaj-15,CarMarkSpa-08}. A necessary and sufficient condition for $\mathcal{E}^{GP}_{\textrm{dip}}$ to be bounded below on the unit sphere of $L^2(\mathbb{R}^3)$ is, in our units, $a_s \geq \add$. This comes from the fact that the kinetic energy term and the interaction term do not have the same scaling properties with respect to dilatations. Adding the (positive) LHY term allows to prevent the collapse and to access the previously unstable regime $a_s < \add$. Without confining potential, a necessary condition for the minimizing sequences of $\mathcal{E}^{GP}_{\textrm{dip}}$ to be pre-compact is the negativity of the ground state energy, as we will prove. But because of the stabilization mechanism itself, this condition does not hold for all possible choice of parameters $a_s, \add, \gamma_{\textrm{QF}}$ and $N$. The case $\gamma_{\rm QF} < 0$ was analyzed in \cite{LuoAth-18} where the authors find solutions to (\ref{eq:EL_intro}) by means of mountain pass arguments.

In this paper, we study the existence and non existence of the minimizers of (\ref{eq_GP_0}) as well as their regularity. Our main result is Theorem \ref{theo_min} in which we show that the minimum energy is decreasing in $N$ and that there is some critical mass $N_c(a_s,\add,\gamma_{\textrm{QF}})$ below which it is zero and there is no ground state and above which it is negative and there is at least one ground state. We also derive some upper and lower bound on $N_c$.

During the preparation of this work a similar result was announced \cite{LuoAth-19} in which the authors study the existence of standing waves for the dipolar Gross-Pitaevskii functional with the LHY non-linearity replaced by $|\psi|^p$ for $p\in (4,6]$. However, the case of a general long-range interaction given by (\ref{eq:def_K}) is not dealt with and does not seem to follow from their proof. Their approach uses the particular symmetry of the dipole-dipole potential which allows to reformulate the interaction energy as the sum of a local term and another term involving the Riesz transform.

\subsubsection*{\textbf{Acknowledgment}}
This project has received funding from the European Research Council (ERC) under the European Union's Horizon 2020 research and innovation programme (grant agreement MDFT No 725528 of Mathieu Lewin).

\section{Main results}

We first rescale the functional to get rid of redundant parameters. For $\lambda > 0$ and $\psi \in H^1(\mathbb{R}^3)$ such that $\int_{\mathbb{R}^3} |\psi|^2 = \lambda $, we denote $\psi_{\alpha,\ell} = \alpha^{1/2} \ell^{3/2} \psi(\ell \cdot)$ and compute
\begin{align*}
\mathcal{E}^{GP}_{\textrm{dip}}(\psi_{\alpha,\ell}) = \alpha \ell^{2} \left( \Inte |\nabla \psi|^2 + \frac{a_s\alpha \ell }{2} \Inte |\psi|^4 + \frac{\add \alpha \ell }{2} \Inte K\star |\psi|^2 |\psi|^2 + \frac{2}{5}\gamma_{\textrm{QF}} \alpha^{3/2} \ell^{5/2}  \Inte |\psi|^5 \right).
\end{align*}
Note that $N = \alpha \lambda$. Taking $\alpha \ell a_s = 1$ and $\alpha^{3/2} \ell^{5/2} \gamma_{QF} = 1$, denoting $b = \add / a_s$ and dividing by $\alpha \ell^2$ we obtain
\begin{equation}
	\label{eq_GP}
	\mathcal{E}_b(\psi) := \Inte |\nabla \psi|^2 + \frac{1}{2} \Inte |\psi|^4 + \frac{b}{2} \Inte K\star |\psi|^2 |\psi|^2 + \frac{2}{5} \Inte |\psi|^5,
\end{equation}
with the new constraint $\int_{\mathbb{R}^3} |\psi|^2 = \lambda$.

Recall that the third term with $K$ has to be understood in the sense of the principal value, that is $K = \lim_{\varepsilon \to 0} \mathds{1}_{|x|>\varepsilon} K$ in $\mathcal{D}'$. It is classical \cite{Duo01} that when $\Omega$ is an even continuous function on the sphere $\mathbb{S}^2$ satisfying (\ref{eq:def_K}), then for $1<p<\infty$ and any $f\in L^p(\mathbb{R}^3)$, the following limit exists for almost all $x\in \mathbb{R}^3$
\begin{equation*}
K\ast f(x) := \lim_{\varepsilon \to 0} (\mathds{1}_{|x|>\varepsilon} K) \ast f (x).
\end{equation*}
Moreover there exists some constant $C_p>0$ such that for any $\varepsilon >0$ and any $f\in L^p(\mathbb{R}^3)$ we have
\begin{align}
	\label{eq:lp_continuity_espilon}
\| (\mathds{1}_{|x|>\varepsilon} K) \ast f \|_{L^p(\mathbb{R}^3)} & \leq C_p \|f \|_{L^p(\mathbb{R}^3)}, \\
	\label{eq:lp_continuity}
\|K \ast f \|_{L^p(\mathbb{R}^3)}  &\leq C_p \|f \|_{L^p(\mathbb{R}^3)}.
\end{align}
Additionally, $\widehat{K} \in L^\infty(\mathbb{R}^3)$ and up to modifying the parameter $b$, we will assume that $$\inf \widehat{K} = -1.$$

We now state our main result. We first need some notations and definitions. For $\lambda,b >0$, we denote by 
\begin{equation}
	\label{eq_def_min}
E(\lambda,b) = \inf_{ \int |\psi|^2 = \lambda } \mathcal{E}_b(\psi)
\end{equation}
the ground state energy with mass contraint $\lambda$.

\begin{theo}
	\label{theo_min}
For any fixed $b>0$, the function 
\begin{align*}
\lambda \in \mathbb{R}_{+} \mapsto E(\lambda,b)
\end{align*}
is non-increasing and there exists $0< \lambda_c(b) < \infty$ such that the following hold
\begin{itemize}
\item
for $ 0 < \lambda < \lambda_c(b)$, we have $E(\lambda,b) = 0$ and there is no minimizer

\item
the function $\lambda \mapsto E(\lambda, b)$ is strictly decreasing on $[\lambda_c(b),+\infty)$. For every $\lambda \geq \lambda_c(b)$ it admits at least one minimizer. It solves the equation
\begin{equation}
	\label{eq:EL}
\left(-\Delta + |\psi|^2 + b K\star|\psi|^2 + |\psi|^3 - \mu \right) \psi = 0,
\end{equation}
with $\mu <0$. In addition, $\psi$ is $C^\infty$ and decays exponentially.
\end{itemize}
Finally we have
\begin{equation}
	\label{eq:domain_admissible_par}
\frac{2^{1/2}5^{1/2} 3 \pi}{(b-1)^{5/2}}\leq \lambda_c(b),
\end{equation}
and in the dipolar case (\ref{eq:def_K_dip}) we have
\begin{equation}
		\label{eq:domain_admissible_par_upper_bound}
\lambda_c(b) \leq 84.437 \frac{1}{(b-1)^{5/2}}.
\end{equation}
\end{theo}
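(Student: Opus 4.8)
The argument has four essentially separate parts: (i) soft properties of $E(\cdot,b)$ (boundedness below, $E\le0$, monotonicity); (ii) the quantitative two-sided bound on $\lambda_c$, obtained by minimizing out the dilation parameter; (iii) existence of minimizers when $E(\lambda,b)<0$, via concentration--compactness; (iv) regularity and decay through the Euler--Lagrange equation. The step I expect to be the real obstacle is the strict binding inequality needed to exclude dichotomy; everything else is soft or a matter of careful but routine estimates, the non-local term requiring extra care since it is not weakly continuous.

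\textbf{Soft properties and bounds on $\lambda_c$.} By Plancherel and $\inf\widehat K=-1$ one has $\tfrac b2\int K\star|\psi|^2|\psi|^2\ge-\tfrac b2\|\psi\|_4^4$, so
\[
\mathcal E_b(\psi)\ \ge\ \|\nabla\psi\|_2^2-\tfrac{b-1}{2}\|\psi\|_4^4+\tfrac25\|\psi\|_5^5 .
\]
Interpolating $\|\psi\|_4^4$ against $\|\psi\|_2,\|\psi\|_5$ and absorbing the quartic into the quintic and kinetic terms gives $\mathcal E_b(\psi)\ge\|\nabla\psi\|_2^2-C(b,\lambda)$ on $\{\|\psi\|_2^2=\lambda\}$, hence $E(\lambda,b)>-\infty$ and every minimizing sequence is bounded in $H^1$ and in $L^5$; that $E(\lambda,b)\le0$ follows by letting $\ell\to0$ in $\psi_\ell=\ell^{3/2}\psi(\ell\cdot)$. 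Monotonicity of $\lambda\mapsto E(\lambda,b)$ is obtained by gluing far away a widely-spread bump carrying the extra mass, all self- and cross-terms vanishing in the limit (the dipolar cross-term being killed by $|K(x)|\lesssim|x|^{-3}$ and the separation of supports). To bound $\lambda_c$, write a generic mass-$\lambda$ state as $(\sqrt\lambda Q)_\ell$ with $\|Q\|_2=1$: then $\mathcal E_b((\sqrt\lambda Q)_\ell)=\ell^2\lambda\,a_Q+\ell^3\lambda^2\beta_Q+\ell^{9/2}\lambda^{5/2}c_Q$ with $a_Q=\|\nabla Q\|_2^2$, $\beta_Q=\tfrac12\|Q\|_4^4+\tfrac b2\int K\star|Q|^2|Q|^2$, $c_Q=\tfrac25\|Q\|_5^5$, and an elementary minimization over $\ell>0$ of $\ell^2 a+\ell^3\beta+\ell^{9/2}c$ yields the exact dichotomy $E(\lambda,b)<0\iff\lambda^{2/5}>\kappa_0\inf_{\beta_Q<0}a_Q^{3/5}c_Q^{2/5}/|\beta_Q|$ for an explicit $\kappa_0$; in particular $\lambda_c(b)$ is given by a closed formula and is finite since for $b>1$ one can choose $Q$ with $\beta_Q<0$. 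For the lower bound \eqref{eq:domain_admissible_par}, combine $|\beta_Q|\le\tfrac{b-1}{2}\|Q\|_4^4$ with the Hölder--Sobolev estimate $\|Q\|_4^4\le S_3^{6/5}\|\nabla Q\|_2^{6/5}\|Q\|_5^2$ ($S_3$ the sharp $3$D Sobolev constant) and the elementary $\|\nabla Q\|_2^2+\tfrac25\|Q\|_5^5\ge c^\ast\|\nabla Q\|_2^{6/5}\|Q\|_5^2$; tracking the constants produces the stated threshold. For the upper bound \eqref{eq:domain_admissible_par_upper_bound} in the dipolar case, insert an explicit anisotropic Gaussian $Q$ elongated along $n$ — so that $\widehat{|Q|^2}$ concentrates near the plane $\perp n$ where $\widehat{K_{\mathrm{dip}}}\to-1$, making $\beta_Q<0$ — compute $\|Q\|_2,\|\nabla Q\|_2,\|Q\|_4,\|Q\|_5$ and $\int K_{\mathrm{dip}}\star|Q|^2|Q|^2$ in closed form, and minimize over its two widths; this gives the numerical constant $84.437$.

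\textbf{Existence and the main obstacle.} Fix $\lambda$ with $E(\lambda,b)<0$ and a minimizing sequence $(\psi_n)$, bounded in $H^1$; apply concentration--compactness to $|\psi_n|^2$. \emph{Vanishing} is impossible: if $\sup_y\int_{B(y,1)}|\psi_n|^2\to0$ then by Lions' lemma $\psi_n\to0$ in $L^p$, $2<p<6$, so $\|\psi_n\|_4\to0$, whence by \eqref{eq:lp_continuity} with $p=2$ the dipolar term $\to0$ and $\|\psi_n\|_5\to0$, giving $\liminf\mathcal E_b(\psi_n)\ge0>E(\lambda,b)$. \emph{Dichotomy} is excluded by the strict binding inequality $E(\lambda,b)<E(\mu,b)+E(\lambda-\mu,b)$ for $0<\mu<\lambda$: splitting $\psi_n$ into two pieces with separating supports makes the energy asymptotically additive (again via decay of $K$), forcing $E(\lambda,b)\ge E(\mu_1,b)+E(\mu_2,b)$. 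Strict binding is the crux, and is where the scaling structure does the work: after the substitution $\lambda\leftrightarrow\lambda^{2/5}$, the formula above exhibits $E(\cdot,b)$ as an infimum of functions affine in $\lambda^{2/5}$, hence concave; since $E(\cdot,b)\equiv0$ on $[0,\lambda_c)$ with $\lambda_c>0$, $E\le0$, and $E(\lambda,b)<0$, affineness through the origin is excluded and the binding inequality becomes strict (equivalently, one perturbs a near-minimizer by the mass rescaling $\sqrt t\,\psi$ and uses the virial identity $\tfrac12\|\psi\|_4^4+\tfrac b2\int K\star|\psi|^2|\psi|^2=-\tfrac23\|\nabla\psi\|_2^2-\tfrac35\|\psi\|_5^5$). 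With vanishing and dichotomy ruled out, a translation gives $\psi_n(\cdot-y_n)\rightharpoonup\psi$ in $H^1$ with $\|\psi\|_2^2=\lambda$; lower semicontinuity of each term of $\mathcal E_b$ — the dipolar one via a Brezis--Lieb decomposition exploiting the $L^2\!\to\!L^2$ boundedness of $K\star(\cdot)$ — yields $\mathcal E_b(\psi)\le E(\lambda,b)$, so $\psi$ is a minimizer and $\psi_n\to\psi$ strongly. At $\lambda=\lambda_c$, where $E(\lambda_c,b)=0$ by continuity of the concave function $E(\cdot,b)$, one instead passes to the limit along minimizers at $\lambda_n\downarrow\lambda_c$. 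The Euler--Lagrange equation \eqref{eq:EL} is the constrained criticality relation; pairing it with $\psi$ and with $x\cdot\nabla\psi$ gives $\mu\lambda=-\tfrac13\|\nabla\psi\|_2^2-\tfrac15\|\psi\|_5^5<0$ (so $\mu<0$) and $\mathcal E_b(\psi)=\tfrac13\|\nabla\psi\|_2^2-\tfrac15\|\psi\|_5^5$; the mass-rescaling perturbation of an actual minimizer then gives the strict decrease of $E(\cdot,b)$ on $[\lambda_c,\infty)$, and since a hypothetical minimizer with $\mathcal E_b(\psi)=0$ would satisfy $\mathcal E_b(\sqrt t\,\psi)<0$ for $t>1$, it shows there is no minimizer for $\lambda<\lambda_c$.

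\textbf{Regularity and decay.} Rewriting \eqref{eq:EL} as $-\Delta\psi=(\mu-|\psi|^2-bK\star|\psi|^2-|\psi|^3)\psi$ and bootstrapping with elliptic regularity, Sobolev embeddings and the $L^p$-boundedness of $K\star(\cdot)$ gives $\psi\in C^\infty$ with $\psi\to0$ at infinity; since $\mu<0$ and the right-hand side vanishes at infinity, a comparison (Agmon-type barrier) against $e^{-c|x|}$ yields exponential decay of $\psi$ and its derivatives. Thus the only genuinely delicate point is the strict binding inequality — i.e.\ controlling the competition between the mass-subcritical quartic term, the mass-supercritical quintic term and the long-range dipolar term — which is settled by the concavity/scaling argument above; the non-local term otherwise enters only through \eqref{eq:lp_continuity} and Brezis--Lieb-type splittings.
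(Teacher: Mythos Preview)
Your argument has a genuine gap precisely where you flag the difficulty: the strict binding inequality. The claim that ``after the substitution $\lambda\leftrightarrow\lambda^{2/5}$, the formula above exhibits $E(\cdot,b)$ as an infimum of functions affine in $\lambda^{2/5}$'' does not follow from the scaling identity $\mathcal E_b((\sqrt\lambda\,Q)_\ell)=\ell^2\lambda\,a_Q+\ell^3\lambda^2\beta_Q+\ell^{9/2}\lambda^{5/2}c_Q$. No power substitution makes the three terms simultaneously affine; the best one can do, via $\ell=\lambda^{-3/5}m$, is $\lambda^{-1/5}(a_Qm^2+c_Qm^{9/2})+\lambda^{1/5}\beta_Qm^3$, which for each fixed $(m,Q)$ is \emph{convex} in $\lambda^{1/5}$, not affine. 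So the concavity of $E$ is unproven, and even if it held, concavity with $E\equiv0$ on $[0,\lambda_c)$ only forbids global linearity --- it does not yield strict subadditivity at every pair $(\lambda_1,\lambda_2)$. Likewise, the mass-rescaling $\psi\mapsto\sqrt t\,\psi$ together with the virial identity gives $E((1+\varepsilon)\lambda)<E(\lambda)$ for small $\varepsilon>0$ whenever a minimizer exists at $\lambda$; that is strict \emph{monotonicity}, not strict \emph{binding}.

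The paper closes this gap by a mechanism you do not invoke: the long-range attractive tail of $K$. After extracting a first bubble $Q_1$ at mass $\lambda_1<\lambda$, the mass-rescaling perturbation (your $\mu<0$ computation) rules out vanishing of the remainder, so a second bubble $Q_2$ at mass $\lambda_2>0$ is extracted; both are genuine minimizers, hence $C^\infty$ with exponential decay. One then tests with $Q_1+Q_2(\cdot-R^2u)$ for $u\in\mathbb S^2$ chosen with $\Omega(u)=\inf\Omega<0$: all local and kinetic cross-terms are $O(e^{-tR})$, whereas the dipolar cross-term is $\lambda_1\lambda_2\,\Omega(u)\,R^{-6}(1+o(1))<0$ by an elementary pointwise expansion of $K\star(|Q_1|^2\mathds 1_{B(0,R)})$ at distance $R^2$. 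This forces $E(\lambda_1+\lambda_2)<E(\lambda_1)+E(\lambda_2)$, contradicting the equality imposed by dichotomy. In other words, strict binding here is a genuinely \emph{non-local} effect of the interaction, not a consequence of the power scaling of the local terms; your statement that ``the non-local term otherwise enters only through \eqref{eq:lp_continuity} and Brezis--Lieb-type splittings'' is exactly what fails. (A minor separate point: the interpolation $\|Q\|_4^4\le S_3^{6/5}\|\nabla Q\|_2^{6/5}\|Q\|_5^2$ does not pass the scaling test; the paper instead uses $\|\psi\|_4^4\le\|\psi\|_2^{2/3}\|\psi\|_5^{10/3}$ together with $\|\psi\|_6^2\ge\|\psi\|_2^{-2/9}\|\psi\|_5^{20/9}$ for the lower bound on $\lambda_c$.)
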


It is an interesting problem to derive the exact asymptotics of $\lambda_c(b)$ as $b\to1$ or as $b\to \infty$. The rest of the paper is dedicated to the proof of Theorem \ref{theo_min}.

\section{Proof of Theorem \hyperref[theo_min]{1}}
When $b$ is fixed, to simplify notations, we will simply denote by $E(\lambda)$ and $\mathcal{E}$ respectively $E(\lambda,b)$ and $\mathcal{E}_b$.
\subsection{Monotonicity of the $\lambda \mapsto E(\lambda,b)$}
Let $b>0$ and show that $\lambda \mapsto E(\lambda)$ is non-increasing. It suffices to prove that for all $\lambda_1, \lambda_2 \geq 0$,
\begin{equation}
	\label{eq:subadditive}
E(\lambda_1 + \lambda_2) \leq E(\lambda_1) + E(\lambda_2).
\end{equation}
and that $E \leq 0$. By density we can take $\psi^{(i)}$, for $i\in\{1,2\}$, with compact support and such that $\mathcal{E}(\psi^{(i)}) \leq E(\lambda_i) + \varepsilon$, for some $\varepsilon >0$. We then obtain
\begin{align*}
E(\lambda_1 + \lambda_2) 
	&\leq \lim_{t \to \infty}\mathcal{E}(\psi^{(1)} + \psi^{(2)}(\cdot - t e_1)) = \mathcal{E}(\psi^{(1)}) + \mathcal{E}(\psi^{(2)}) \\
	&\leq E(\lambda_1) + E(\lambda_2) + 2 \varepsilon,
\end{align*}
where $e_1 = (1,0,0)$. It remains to take $\varepsilon$ to zero to obtain (\ref{eq:subadditive}). To show that $E(\lambda)\leq 0 $ for all $\lambda \geq 0$, we take $\varphi \in C^\infty(\mathbb{R}^3)$ with compact support and such that  $\int_{\mathbb{R}^3}|\varphi|^2 = \lambda$, we denote $\varphi_n (x) = n^{-3/2} \varphi(n^{-1}(x-n^2))$ for $n\geq 1$. We have $\int_{\mathbb{R}^3} |\nabla\varphi_n|^2 = n^{-2} \int_{\mathbb{R}^3} |\nabla\varphi|^2 $ and $\|\varphi_n\|_{L^p(\mathbb{R}^3)} = n^{3(1/p - 1/2)} \|\varphi\|_{L^p(\mathbb{R}^3)}$ for $p \geq 1$. Using (\ref{eq:lp_continuity}), we obtain $\mathcal{E}(\varphi_n) \to 0$ as $n \to \infty$.

\subsection{ $\bm{0 < \lambda_c(b) < \infty}$}

To prove that $\lambda_c(b) >0$ it suffices to show the lower bound (\ref{eq:domain_admissible_par}). Let $\psi \in H^{1}(\mathbb{R}^3)$. By Hölder's inequality we have 
\begin{align*}
\|\psi\|_{L^6(\mathbb{R}^3)}^2 &\geq \|\psi\|_{L^2(\mathbb{R}^3)}^{-2/9} \|\psi\|_{L^5(\mathbb{R}^3)}^{20/9}, \\
- \|\psi\|_{L^4(\mathbb{R}^3)}^4 &\geq - \|\psi\|_{L^2(\mathbb{R}^3)}^{2/3} \|\psi\|_{L^5(\mathbb{R}^3)}^{10/3}.
\end{align*}
On the other hand
\begin{equation*}
\int_{\mathbb{R}^3} K \ast |\psi|^2 |\psi|^2 = \int_{\mathbb{R}^3} \widehat{K} \left|\widehat{|\psi|^2}\right|^2 \geq - \int_{\mathbb{R}^3} |\psi|^4.
\end{equation*}
since we have, by assumption, $\inf \widehat{K} = 1$.
Denoting $\lambda = \|\psi\|_{L^2(\mathbb{R}^3)}^2$ and $X = \|\psi\|_{L^5(\mathbb{R}^3)}^5$, we can then bound by below the Gross-Pitaevskii energy in the following way
\begin{align*}
\mathcal{E}(\psi) \geq \Cs \lambda^{-1/9} X^{4/9} - \frac{b-1}{2} \lambda^{1/3} X^{2/3} + \frac{2}{5} X =: F_{1}(\lambda,X),
\end{align*}
where $\Cs = 3 (2\pi)^{2/3}/4$ is the optimal constant in Sobolev's inequality.
We want to compute
\begin{equation*}
\lambda_0 = \sup \left\{ \lambda > 0 \,  \big| \,  F_{1}(\lambda,X) \geq 0, \forall X \geq 0 \right\}.
\end{equation*}
By the form of $F_{1}$, there is a unique such $\lambda_0$, and there is some $X_0 >0$ such that the following system holds
\begin{equation*}
\left\{
\begin{array}{ll}
\quad \quad F_{1}(\lambda_0,X_0) &= 0, \\
(\partial_X F_{1})(\lambda_0,X_0) &= 0.
\end{array}
\right.
\end{equation*}
Solving this, we obtain
\begin{align*}
\lambda_0 =  \left(\frac{5}{3}\right)^{1/2} 2^{5/2} \frac{\Cs^{3/2}}{(b-1)^{5/2}} = \frac{2^{1/2}5^{1/2} 3 \pi}{(b-1)^{5/2}}.
\end{align*}

To prove that $\lambda_c(b) < \infty$, let us take $\psi \in H^1(\mathbb{R}^3)$ such that $ \int_{\mathbb{R}^3} |\psi|^2 = 1$ and
\begin{equation*}
\int_{\mathbb{R}^3} |\psi|^4 + b K \ast |\psi|^2 |\psi|^2 < 0
\end{equation*}
which is possible because $b >1$, see the proof of \cite[Theorem 1]{Tri-18}. With $\psi_{\lambda,\ell} = \lambda^{1/2} \ell^{3/2} \psi(\ell \cdot)$, we have
\begin{align*}
\mathcal{E}(\psi_{\lambda,\ell}) = \lambda \ell^{2} \left( \Inte |\nabla \psi|^2 + \frac{\lambda \ell }{2} \left(\Inte |\psi|^4 + \frac{b}{2} \Inte K\star |\psi|^2 |\psi|^2\right) + \frac{2}{5} \alpha^{3/2} \ell^{5/2}  \Inte |\psi|^5 \right),
\end{align*}
taking $\lambda \to \infty$ and $\ell = \lambda^{-1/2}$, we obtain $\mathcal{E}(\psi_{\lambda,\ell}) \to - \infty$, which proves that $\lambda_c(b) < \infty$ since $E(\lambda)$ cannot stay equal to $0$ on $\mathbb{R}_+$.

\subsection{Properties of solutions to (\ref{eq:EL})}
If $\psi$ is a stationary point of the functional $\mathcal{E}$ under the mass constraint $\lambda$, it solves the Euler-Lagrange equation (\ref{eq:EL}) for some Lagrange multiplier $\mu$. We begin by showing some relations between $\mu, \lambda$ and the different terms in $\mathcal{E}(\psi)$. We define
\begin{align*}
T(\psi) &= \int |\nabla\psi|^2 >0, \\
I(\psi) &= \frac{a_s}{2} \int |\psi|^4 + \frac{\add}{2} \int K\star |\psi|^2 |\psi|^2, \\
Q(\psi) &= \frac{2}{5}\gamma \int |\psi|^5 >0.
\end{align*}

When there is no ambiguity, we will simply denote them by $T,I$ and $Q$.

\begin{lem}
	\label{lem_virial}
Let $\lambda,b >0$ and let $\psi$ be solution to (\ref{eq:EL}) with $\int_{\mathbb{R}^3} |\psi|^2 = \lambda$. Let $\mu$ be the associated Lagrange multiplier in (\ref{eq:EL}), then following equalities hold:
\begin{align}
T &= \frac{3}{2}\left(E - \mu \lambda\right), \label{eq_vir_1}\\
I &= \frac{E + 5\lambda \mu}{2},  \\
Q &= - (\mu \lambda + E).
\end{align}
In particular, if $E\leq 0$ then $\mu < 0$.
\end{lem}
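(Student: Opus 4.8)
The plan is to extract two scalar identities from the Euler--Lagrange equation (\ref{eq:EL}) and combine them with the defining relation $\mathcal E(\psi)=T+I+Q=E$, so that $(T,I,Q)$ becomes the unique solution of an explicit $3\times 3$ linear system with data $E$ and $\mu\lambda$; the sign of $\mu$ then drops out of the first identity together with $T>0$.

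\emph{First identity.} Multiply (\ref{eq:EL}) by $\overline\psi$ and integrate over $\mathbb{R}^3$. All integrals are finite by the Sobolev embeddings $H^1(\mathbb{R}^3)\hookrightarrow L^p$ for $2\le p\le 6$ and the $L^p$-bound (\ref{eq:lp_continuity}) on $K\star\,\cdot\,$, and the nonlocal term is real because $K$ is real and even. Since $\int(-\Delta\psi)\overline\psi=T$, $\int|\psi|^4+b\int(K\star|\psi|^2)|\psi|^2=2I$, and $\int|\psi|^5=\tfrac52 Q$, one obtains $T+2I+\tfrac52 Q=\mu\lambda$.

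\emph{Second (virial) identity.} For $\ell>0$ set $\psi_\ell(x)=\ell^{3/2}\psi(\ell x)$, so that $\|\psi_\ell\|_{L^2}^2=\lambda$ for every $\ell$. Using that $K$ is homogeneous of degree $-3$ --- whence $\int(K\star|\psi_\ell|^2)|\psi_\ell|^2=\ell^{3}\int(K\star|\psi|^2)|\psi|^2$, exactly the scaling of $\int|\psi|^4$ --- a direct computation gives $\mathcal E(\psi_\ell)=\ell^2 T+\ell^3 I+\ell^{9/2} Q$. The curve $\ell\mapsto\psi_\ell$ stays on the mass constraint and passes through $\psi$ at $\ell=1$, so $\ell\mapsto\mathcal E(\psi_\ell)$ is stationary there; differentiating the polynomial at $\ell=1$ yields $2T+3I+\tfrac92 Q=0$. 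Adjoining $T+I+Q=E$, the linear system in $(T,I,Q)$ has the unique solution $T=\tfrac32(E-\mu\lambda)$, $I=\tfrac12(E+5\mu\lambda)$, $Q=-(E+\mu\lambda)$, which are (\ref{eq_vir_1}) and the two relations following it. Finally, $\psi\in H^1(\mathbb{R}^3)$ with $\|\psi\|_{L^2}^2=\lambda>0$ cannot be a nonzero constant, so $T=\int|\nabla\psi|^2>0$; by the first identity $E-\mu\lambda>0$, hence if $E\le 0$ then $\mu\lambda<E\le 0$, and since $\lambda>0$ this gives $\mu<0$.

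The main obstacle is rigor in the virial step, namely justifying that $\ell\mapsto\mathcal E(\psi_\ell)$ is differentiable and stationary at $\ell=1$. When $\psi$ is a minimizer this is immediate, since $\ell=1$ is then a minimum of $\mathcal E(\psi_\ell)$ over an admissible family; for a general solution of (\ref{eq:EL}) it amounts to pairing the equation with the dilation generator $\tfrac32\psi+x\cdot\nabla\psi$ and integrating by parts, which requires knowing beforehand that $\psi$ is regular and decaying enough (e.g.\ $\psi\in H^2(\mathbb{R}^3)$, obtained by bootstrapping (\ref{eq:EL}) via elliptic regularity). Handling the nonlocal term under dilations is by contrast painless, precisely because of the $(-3)$-homogeneity of $K$; everything else is linear algebra.
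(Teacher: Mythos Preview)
Your proof is correct and follows essentially the same route as the paper: three scalar relations --- the definition $T+I+Q=E$, the identity obtained by pairing (\ref{eq:EL}) with $\psi$, and the Pohozaev/virial identity from the mass-preserving dilation $\psi_\ell=\ell^{3/2}\psi(\ell\cdot)$ --- and then linear algebra. The paper records the second relation in the equivalent form $I+\tfrac32 Q=\mu\lambda-E$ (your $T+2I+\tfrac52 Q=\mu\lambda$ minus the first relation), and likewise deduces $\mu<0$ from $T>0$; your added remarks on the regularity needed to justify the virial step are in line with the paper's parenthetical comment that one may alternatively multiply (\ref{eq:EL}) by $x\cdot\nabla\psi$.
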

\begin{proof}
We have
\begin{align}
T + I + Q &= E(\lambda) \label{eq_pf_vir_1}, \\
I + \frac{3}{2} Q &= -E(\lambda) +\mu\lambda \label{eq_pf_vir_2},\\
2 T + 3 I + \frac{9}{2} Q &= 0.\label{eq_pf_vir_3}
\end{align}
The equation (\ref{eq_pf_vir_1}) is simply the definition of $\mathcal{E}$. Equation (\ref{eq_pf_vir_2}) is obtained by integrating the Euler-Lagrange equation (\ref{eq:EL}) against the solution $\psi$. Finally, (\ref{eq_pf_vir_3}) is a consequence of the virial theorem: for $\alpha > 0$ we denote $\psi_\alpha = \alpha^{3/2}\psi(\alpha \cdot )$ and since the function $\alpha \mapsto \mathcal{E}(\psi_\alpha)$ is stationary at $\alpha = 1$, its derivative vanishes, this gives (\ref{eq_pf_vir_3}) (we can also obtain (\ref{eq_pf_vir_3}) by multiplying (\ref{eq:EL}) by $x \cdot \nabla \psi$). It remains to solve the linear system (\ref{eq_pf_vir_1}),(\ref{eq_pf_vir_2}),(\ref{eq_pf_vir_3}) which gives the result. The negativity of $\mu$ comes from (\ref{eq_vir_1}), indeed we have $\lambda \mu = E - 2T/3 < 0$, if $E\leq 0$.
\end{proof}

We then state some lemma about the regularity of the solutions of the Euler-Lagrange equation (\ref{eq:EL}). We give its proof in Appendix \ref{app:Lemma_regularity}.

\begin{lem}\label{lemma_reg_minimizers}
Let $\psi \in H^1(\mathbb{R}^3)$ be a solution of the Euler-Lagrange equation (\ref{eq:EL}) for some $\mu < 0$. Then $\psi \in C^{\infty}(\mathbb{R}^{3})$ and there exists some constants $C,t>0$ such that for all $x\in\mathbb{R}^{3}$
\begin{equation*}
0< |\psi(x)| \leq C e^{-t|x|}.
\end{equation*}
\end{lem}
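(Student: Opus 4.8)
The plan is to proceed in three stages: a bootstrap that upgrades first the integrability and then the differentiability of $\psi$; the pointwise lower bound $|\psi|>0$; and a comparison argument for the exponential upper bound. The recurring technical difficulty will be the singular nonlocal term $K\ast|\psi|^{2}$. First I would rewrite \eqref{eq:EL} in the fixed-point form
\[
\psi=\bigl(-\Delta+|\mu|\bigr)^{-1}\Bigl[-|\psi|^{2}\psi-|\psi|^{3}\psi-b\,(K\ast|\psi|^{2})\,\psi\Bigr],
\]
which is legitimate since $\mu<0$. The resolvent $(-\Delta+|\mu|)^{-1}$ maps $L^{p}(\mathbb{R}^{3})$ boundedly into $W^{2,p}(\mathbb{R}^{3})$ for every $1<p<\infty$, and by \eqref{eq:lp_continuity} convolution with $K$ is bounded on each such $L^{p}$. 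Starting from $\psi\in H^{1}(\mathbb{R}^{3})\hookrightarrow L^{q}$, $2\le q\le6$, Hölder's inequality shows that the bracket lies in $L^{q_{0}/4}$ whenever $\psi\in L^{q}$ for all $q\le q_{0}$; since $6/4=3/2>1$, one application of the resolvent together with the borderline embedding $W^{2,3/2}(\mathbb{R}^{3})\hookrightarrow L^{r}$, valid for all $r<\infty$, already gives $\psi\in L^{r}(\mathbb{R}^{3})$ for every $r<\infty$, and a second application gives $\psi\in W^{2,p}(\mathbb{R}^{3})$ for all $p<\infty$. Hence $\psi\in C^{1,\alpha}(\mathbb{R}^{3})$ for all $\alpha<1$, and since $\psi\in W^{1,p}\cap L^{2}$ this also yields $\psi(x)\to0$ as $|x|\to\infty$.

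Next I would prove $|\psi(x)|>0$ for all $x$. By Kato's inequality, $v:=|\psi|\ge0$ satisfies $-\Delta v\le Wv$ in $\mathcal{D}'$, where $W:=|\psi|^{2}+|\psi|^{3}+bK\ast|\psi|^{2}-\mu$ is bounded — for the nonlocal contribution one combines \eqref{eq:lp_continuity} with the $C^{0,\alpha}$-regularity and the decay of $|\psi|^{2}$, the singularity being absorbed by the cancellation \eqref{eq:def_K}. Since on the open set $\{\psi\ne0\}$ the function $v$ is a genuine positive solution of $-\Delta v=Wv$, and strong unique continuation forbids $\psi$ from vanishing on an open set, the strong maximum principle (equivalently, Harnack's inequality) forces $v>0$ everywhere; by continuity $\psi$ has no zero. (For a minimizer one may instead replace $\psi$ by $|\psi|$, which does not increase $\mathcal{E}$, and apply the strong maximum principle directly to the linear equation $(-\Delta+W)\psi=0$.) Once $|\psi|>0$ is known, $\psi\mapsto|\psi|^{3}\psi$ is a smooth function of $\psi$ along the solution, so a Schauder bootstrap in Hölder spaces — again carrying $K\ast|\psi|^{2}$ with the help of Schauder estimates for singular integrals — promotes $\psi$ to $C^{\infty}(\mathbb{R}^{3})$.

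For the exponential bound, the key is that the effective potential is positive at infinity. I would first check $K\ast|\psi|^{2}(x)\to0$ as $|x|\to\infty$: splitting the integral at $|y-x|=1$, the far part is $O(|x|^{-3})$ from $|K(z)|\le C|z|^{-3}$ and $|\psi|^{2}\in L^{1}$, while the near part, after using $\int_{|z|<1}K=0$, is controlled by the local $C^{0,\alpha}$-seminorm of $|\psi|^{2}$ on $B(x,1)$, which tends to $0$ since $\psi\in C^{1,\alpha}$ with $\psi\to0$. Hence $W(x)\to-\mu=|\mu|>0$, so there is $R>0$ with $W\ge|\mu|/2$ on $\{|x|>R\}$. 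Kato's inequality then gives $-\Delta v+\tfrac{|\mu|}{2}v\le0$ on $\{|x|>R\}$, whereas for $t:=\sqrt{|\mu|/2}$ the function $\Phi(x):=Ae^{-t|x|}$ satisfies $-\Delta\Phi+\tfrac{|\mu|}{2}\Phi=\tfrac{2t}{|x|}\Phi\ge0$ there. Choosing $A$ large enough that $\Phi\ge v$ on $\partial B_{R}$ and using that $v$ and $\Phi$ both vanish at infinity, the maximum principle for $-\Delta+\tfrac{|\mu|}{2}$ on the exterior domain $\{|x|>R\}$ gives $v\le\Phi$ there; enlarging the constant to cover $\overline{B_{R}}$ yields $|\psi(x)|\le Ce^{-t|x|}$ on all of $\mathbb{R}^{3}$.

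I expect the main obstacle to be the handling of the term $K\ast|\psi|^{2}$ throughout: being a Calderón–Zygmund operator rather than a smoothing convolution, it improves neither integrability nor differentiability, so it must be propagated through every bootstrap step via \eqref{eq:lp_continuity} and via Schauder estimates for singular integrals, and it must be shown separately to decay at infinity — which is exactly where the mean-zero condition \eqref{eq:def_K} enters. A secondary subtlety is that the Lee–Huang–Yang nonlinearity $|\psi|^{3}\psi$ is only of class $C^{3}$, so reaching $C^{\infty}$ regularity genuinely requires the positivity $|\psi|>0$ to be established first.
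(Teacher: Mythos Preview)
Your three–stage outline (bootstrap, positivity, comparison) is exactly the paper's strategy, and the treatment of the singular convolution — carrying $K\ast|\psi|^2$ through the bootstrap via \eqref{eq:lp_continuity}, and proving its decay at infinity by splitting near/far and using the cancellation \eqref{eq:def_K} together with H\"older continuity of $|\psi|^2$ — matches the appendix almost line for line. The only substantive differences are cosmetic: the paper bootstraps directly in $H^k$ after first reducing to $\psi\ge 0$ (so that $|\psi|^3\psi=\psi^4$ is smooth), whereas you go through $L^p\to W^{2,p}$ and then Schauder, postponing the $C^\infty$ step until positivity is known; and for the barrier the paper uses the Yukawa potential $Y_t$ while you use $Ae^{-t|x|}$, which works equally well.

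One point deserves tightening. Kato's inequality applied to \eqref{eq:EL} gives $-\Delta v + (W_0-\mu)v\le 0$, i.e.\ $v=|\psi|$ is a \emph{sub}solution of $-\Delta+(W_0-\mu)$, not a supersolution; the strong maximum/Harnack principle you invoke for positivity needs the opposite sign, so the argument ``$-\Delta v\le Wv$ $\Rightarrow$ $v>0$'' does not go through for a general solution of \eqref{eq:EL}. The paper sidesteps this by observing at the outset that a minimizer satisfies $\mathcal E(|\psi|)\le\mathcal E(\psi)$, hence one may take $\psi\ge 0$, and then $\psi$ itself is a nonnegative \emph{solution} to which Harnack applies directly. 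Your parenthetical remark about minimizers is precisely this reduction, and is the route you should commit to; the lemma as stated (and as proved in the paper) is really only established for minimizers.
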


\subsection{Existence of a minimizer, case $\lambda > \lambda_c$}

We begin with the case $\lambda > \lambda_c(b)$. We use the concentration-compactness method \cite{Lions-84,Lions-84b,Lieb-83}. In particular, we follow \cite{Lewin-VMQM} and use the following lemma.
\begin{lem}\label{lem_rel_compact}
Let $\{\psi_n\}$ be any bounded sequence in $H^1(\mathbb{R}^3)$, define 
\begin{equation*}
\m(\{\psi_n\}) = \sup \left\{ \int |\psi|^2 \quad \Big| \quad \exists \{x_{n_k}\} \subset \mathbb{R}^3, \psi_{n_k}(\cdot - x_{n_k}) \rightharpoonup \psi \textrm{ weakly in } H^1(\mathbb{R}^3) \right\}.
\end{equation*}
Then the following assertions are equivalent
\begin{itemize}
\item $\m(\{\psi_n\}) = 0$
\item $\psi_n \to 0$ strongly in $L^p(\mathbb{R}^3)$ for all $2<p<6$.
\end{itemize}
\end{lem}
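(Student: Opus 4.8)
The plan is to prove the two implications separately. The implication \emph{``$\psi_n\to 0$ strongly in $L^p$ for $2<p<6$'' $\Rightarrow$ ``$\m(\{\psi_n\})=0$''} is elementary: if $\psi_{n_k}(\cdot-x_{n_k})\rightharpoonup\psi$ weakly in $H^1(\mathbb R^3)$ for some sequence $\{x_{n_k}\}\subset\mathbb R^3$, then, the embedding $H^1(B)\hookrightarrow L^3(B)$ being compact on every ball $B\subset\mathbb R^3$, we have $\psi_{n_k}(\cdot-x_{n_k})\to\psi$ strongly in $L^3(B)$. Since translations are isometries of $L^3$, $\|\psi_{n_k}(\cdot-x_{n_k})\|_{L^3(\mathbb R^3)}=\|\psi_{n_k}\|_{L^3(\mathbb R^3)}\to 0$, so $\psi=0$ on every ball, hence $\psi\equiv 0$ and $\int_{\mathbb R^3}|\psi|^2=0$. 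Taking the supremum over all admissible $\psi$ gives $\m(\{\psi_n\})=0$.

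For the converse we assume $\m(\{\psi_n\})=0$ and argue in two steps. \emph{Step 1: no concentration on a fixed scale.} We claim that $\eta_n:=\sup_{z\in\mathbb Z^3}\int_{z+[0,1)^3}|\psi_n|^2\to 0$. If not, then along a subsequence there exist $\delta>0$ and $z_n\in\mathbb Z^3$ with $\int_{z_n+[0,1)^3}|\psi_n|^2\ge\delta$. The translates $\psi_n(\cdot+z_n)$ are bounded in $H^1(\mathbb R^3)$, so after a further extraction they converge weakly in $H^1(\mathbb R^3)$ to some $\psi$ and, by Rellich--Kondrachov, strongly in $L^2([0,1)^3)$; hence $\int_{[0,1)^3}|\psi|^2\ge\delta>0$, so $\int_{\mathbb R^3}|\psi|^2>0$, which, with $x_n=-z_n$, contradicts $\m(\{\psi_n\})=0$.

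\emph{Step 2: from $\eta_n\to 0$ to $L^p$-smallness.} For $p=\tfrac{10}{3}$ put $\theta=\tfrac35$, so that $\tfrac1p=\tfrac{1-\theta}{2}+\tfrac{\theta}{6}$ and $\theta p=2$. On the unit cube $Q=[0,1)^3$, Hölder's inequality together with the Sobolev embedding $H^1(Q)\hookrightarrow L^6(Q)$ give $\|u\|_{L^p(Q)}\le C\|u\|_{L^2(Q)}^{1-\theta}\|u\|_{H^1(Q)}^{\theta}$ with a constant independent of the position of the cube. Summing the $p$-th powers over the disjoint cubes $Q_z=z+Q$, $z\in\mathbb Z^3$, and using $\theta p=2$ (so $(1-\theta)p/2=2/3$),
\[
\|u\|_{L^{10/3}(\mathbb R^3)}^{10/3}\le C\sum_{z\in\mathbb Z^3}\Big(\int_{Q_z}|u|^2\Big)^{2/3}\|u\|_{H^1(Q_z)}^2\le C\Big(\sup_{z\in\mathbb Z^3}\int_{Q_z}|u|^2\Big)^{2/3}\|u\|_{H^1(\mathbb R^3)}^2 .
\]
Applied to $u=\psi_n$, this yields $\|\psi_n\|_{L^{10/3}(\mathbb R^3)}\to 0$, since $\{\psi_n\}$ is bounded in $H^1(\mathbb R^3)$ and $\sup_z\int_{Q_z}|\psi_n|^2\le\eta_n\to 0$. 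Interpolating then between $L^2$ and $L^{10/3}$ for $2<p\le\tfrac{10}{3}$, and between $L^{10/3}$ and $L^6$ for $\tfrac{10}{3}\le p<6$ (using the uniform $L^2$ bound and the uniform $L^6$ bound inherited from $H^1$-boundedness), we conclude $\psi_n\to 0$ strongly in $L^p(\mathbb R^3)$ for every $2<p<6$.

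The only genuinely non-routine ingredient is Step 2 — the covering inequality and the bookkeeping of the exponents so that the local $H^1$-norms sum to the global one; Step 1 is a standard use of local compactness, and the first implication is immediate. I would use without further comment that the Sobolev constant on a unit cube is independent of its position, by translation invariance of the tiling.
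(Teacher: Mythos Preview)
Your proof is correct; it is the standard Lions ``vanishing lemma'' argument (the cube decomposition in Step~2 is exactly the $pqr$-type estimate appearing in \cite{Lions-84}). The paper itself does not prove this lemma: it is quoted from the concentration-compactness literature, with an explicit pointer to \cite{Lewin-VMQM} (and \cite{Lions-84,Lions-84b,Lieb-83}), so there is no in-text argument to compare against beyond noting that your approach is precisely the one those references give.

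One cosmetic remark on Step~1: you write ``with $x_n=-z_n$'', but the weak limit is extracted along a subsequence, so strictly speaking you are producing $x_{n_k}=-z_{n_k}$ and a nonzero weak $H^1$-limit of $\psi_{n_k}(\cdot-x_{n_k})$; this is exactly what the definition of $\m$ requires, so the contradiction is clean. Everything else (the exponent bookkeeping $\theta p=2$, the summability of $\|u\|_{H^1(Q_z)}^2$ over the disjoint tiling, and the final interpolation) is carried out correctly.
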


We can now show that minimizing sequences are precompact in $H^1(\mathbb{R}^3)$. Let $\{\psi_n\} \subset H^1(\mathbb{R}^3)$, $\int_{\mathbb{R}^3} |\psi_n|^2 = \lambda$, with $\lambda > \lambda_c(b)$ so that $E(\lambda) < 0$, be a minimizing sequence for $\mathcal{E}$. By (\ref{eq:lp_continuity}) we verify that $\{\psi_n\}$ is bounded in $H^1(\mathbb{R}^3)$ so that we can apply the above lemma. If $\m(\{\psi_n\}) = 0$ then it follows that $\mathcal{E}(\psi_n) \to 0$ which contradicts $E(\lambda) <0$. We can therefore find $Q_1 \in H^1(\mathbb{R}^3)$, $Q_1 \not\equiv 0$, such that, up to translations, $\psi_n \rightharpoonup Q_1$ weakly in $H^1(\mathbb{R}^3)$. We denote
\begin{equation*}
\psi_n = Q_1 + r_n, \qquad r_n \rightharpoonup 0 \quad \textrm{ in } \quad  H^1(\mathbb{R}^3).
\end{equation*}
Because of the weak convergence in $H^1(\mathbb{R}^3)$, we have
\begin{equation*}
\int_{\mathbb{R}^3} |\nabla \psi_n|^2 = \int_{\mathbb{R}^3} |\nabla Q_1|^2 + \int_{\mathbb{R}^3} |\nabla r_n|^2 + o(1).
\end{equation*}
Moreover, up to extracting a subsequence, we can assume strong local convergence in $L^2(\mathbb{R}^3)$ and convergence. By \cite[Theorem 1.9]{LieLos-01}, we obtain
\begin{equation*}
\lim_{n\to\infty} \int_{\mathbb{R}^3} |\psi_n|^p - |r_n|^p = \int_{\mathbb{R}^3} |Q_1|^p,
\end{equation*}
for all $2\leq p \leq 6$. To deal with the non-local term, we note that $\|r_n Q_1\|_{L^2(\mathbb{R}^3)} \to 0$ as $n\to \infty$ since $r_n^2 \rightharpoonup 0$ in $L^2(\mathbb{R}^3)$. Hence by (\ref{eq:lp_continuity}) we conclude that
\begin{equation*}
\int_{\mathbb{R}^3} K \ast |\psi_n|^2 |\psi_n|^2 = \int_{\mathbb{R}^3} K \ast |Q_1|^2 |Q_1|^2 + \int_{\mathbb{R}^3} K \ast |r_n|^2 |r_n|^2 + o(1).
\end{equation*}
Denoting $\lambda_1 = \|Q_1\|^2_{L^2(\mathbb{R}^3)}$ and $\widetilde{r}_n = (\lambda - \lambda_1) r_n / \| r_n\|_{L^2(\mathbb{R}^3)}$, so that $\|\widetilde{r}_n -r_n\|_{H^1(\mathbb{R}^3)} \to 0$, we obtain from the previous estimates
\begin{align}
\mathcal{E}(\psi_n) 
&= \mathcal{E}(Q_1) + \mathcal{E}(\widetilde{r}_n) + o(1) \label{eq:min_energy_1} \\
&\geq \mathcal{E}(Q_1) + E(\lambda-\lambda_1) + o(1), \nn
\end{align}
where we used that $\mathcal{E}$ is locally uniformly continuous in $H^1(\mathbb{R}^3)$. From this we obtain that $E(\lambda) = E(\lambda_1) + E(\lambda - \lambda_1)$, that $Q_1$ is a minimizer of $\mathcal{E}$ for the mass constraint $\lambda_1$ and that $\{\widetilde{r}_n\}$ is a minimizing sequence for the mass constraint $\lambda-\lambda_1$. If $\lambda_1 = \lambda$ the result is proved since $E(0) = 0$. Let us then assume $\lambda_1 < \lambda$, in this case one must have $\m(\{\widetilde{r}_n^{(2)}\}) > 0$, otherwise we obtain by Lemma \ref{lem_rel_compact} that $\liminf \mathcal{E}(\widetilde{r}_n) \geq 0$ and from (\ref{eq:min_energy_1}) that $E(\lambda) = E(\lambda_1)$, since $E$ is non-increasing. But $Q_1$ is a minimizer of $\mathcal{E}$ and therefore satisfies (\ref{eq:EL}) with some $\mu < 0$ by Lemma \ref{lem_virial}. Hence for any $\varepsilon >0$ small enough we have
\begin{align}
\mathcal{E}((1+\varepsilon)Q_1) 
=\mathcal{E}(Q_1) + 2 \mu \lambda_1 \varepsilon + o(\varepsilon) < \mathcal{E}(Q_1), \label{eq:derivee_negative}
\end{align}
since $\mu < 0$. For $\varepsilon >0$ small enough we thus obtain $E(\lambda) \leq E(\lambda_1 + \varepsilon) < E(\lambda_1)$ which contradicts $E(\lambda) = E(\lambda_1)$, hence $\m(\{\widetilde{r}_n^{(2)}\}) > 0$. Doing the same procedure for $\widetilde{r}_n$ as we did with $\psi_n$ we obtain
\begin{equation*}
\widetilde{r}_n = Q_2 + q_n, \qquad q_n \rightharpoonup 0 \quad \textrm{ in } \quad H^1(\mathbb{R}^3),
\end{equation*}
and we denote $\lambda_2 = \int_{\mathbb{R}^3}|Q_2|^2$. From (\ref{eq:min_energy_1}), the same computations as before lead to
\begin{equation*}
E(\lambda) \leq E(\lambda_1) + E(\lambda_2) + E(\lambda-\lambda_1 -\lambda_2),
\end{equation*}
and $\mathcal{E}(Q_2) = E(\lambda_2)$. From this and (\ref{eq:subadditive}) we deduce that
\begin{equation}
	\label{eq_egalite_min_2_contraintes}
E(\lambda_1 + \lambda_2) = E(\lambda_1) + E(\lambda_2),
\end{equation}
which we will prove cannot hold. Since $(Q_i)_{i\in \{1,2\}}$ are minimizers, they satisfy the Euler-Lagrange equation (\ref{eq:EL}) and by Lemma \ref{lemma_reg_minimizers} they are $C^{\infty}$ and have exponential decay. Let us write $Q_i = Q_i \chi(\cdot / R) + Q_i (1-\chi(\cdot / R)) = Q_{(i,1)} +Q_{(i,2)}$ for some $\chi \in C^\infty_c(\mathbb{R}^3)$ with $\chi\equiv 1$ in $B(0,1)$ and $\chi\equiv 0$ in $\mathbb{R}^3\setminus B(0,2)$. We have $\|Q_{(i,2)}\|_{L^p(\mathbb{R}^3)} \leq C e^{-t R}$, for $2\leq p \leq 6$, for some constants $C,t>0$. Taking $u \in \mathbb{S}^2$, we define 
\begin{equation*}
\psi_R = Q_1 + Q_2(\cdot- R^2 u).
\end{equation*}
Since $\|\psi_R\|_{L^2(\mathbb{R}^3)} \to \lambda_1 + \lambda_2$ when $R \to \infty$ we can write $\psi_R = \widetilde{\psi}_R + \psi_R\widetilde{\psi}_R$ with $\|\widetilde{\psi}_R\|_{L^2(\mathbb{R}^3)} = \lambda_1 + \lambda_2$ and $\|\psi_R - \widetilde{\psi}_R\|_{H^1(\mathbb{R}^3)} = O(e^{-tR})$ when $R \to \infty$ in $H^1(\mathbb{R}^3)$.
We then have
\begin{align}
\mathcal{E}(\psi_R)
&=\mathcal{E}(\widetilde{\psi}_R)   + O(e^{-tR})  \nn \\
&= \mathcal{E}(Q_1) + \mathcal{E}(Q_2) + b \int_{\mathbb{R}^3} K \star Q_{1,1}^2 Q_{2,1}^2(\cdot -R^2 u ) + \mathcal{O}(e^{-tR}) +  \mathcal{O}(e^{-tR^2}),\label{eq:energy}
\end{align}
as $\alpha \to \infty$, where we used (\ref{eq:lp_continuity}). We will now use the following lemma whose proof is postponed until the end of the argument.

\begin{lem}\label{lem_dip_pot_negatif}
Let $\phi \in C^0(\mathbb{R}^3) \cap L^2(\mathbb{R}^3)$. Then for all $u\in\mathbb{S}^2$, 
\begin{equation*}
\sup_{v\in B(0,R)} K\star( |\phi|^2\mathds{1}_{B(0,R)} )(R^2 u + v) \leq \; \frac{\int_{\mathbb{R}^3} |\phi|^2 }{R^6}\left(\Omega(u) +o(1)\right),
\end{equation*}
as $R \to \infty$
\end{lem}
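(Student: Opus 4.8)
The plan is to observe that for the points $x=R^2u+v$ under consideration the mass $|\phi|^2\mathds 1_{B(0,R)}$ sits at distance of order $R^2$ from $x$, so the principal-value convolution reduces to an ordinary integral, and then to expand the kernel $K$ around the direction $u$. More precisely, fix $u\in\mathbb S^2$ and $R>2$ and let $x=R^2u+v$ with $|v|\le R$. Since $\phi\in L^2(\mathbb R^3)$ we have $|\phi|^2\mathds 1_{B(0,R)}\in L^1(\mathbb R^3)$, and for every $y\in B(0,R)$ one has $|x-y|\ge R^2-2R>0$; hence for all $\varepsilon<R^2-2R$ the truncated kernel $\mathds 1_{|z|>\varepsilon}K$ agrees with $K$ on $x-B(0,R)$, and therefore
$$K\star\big(|\phi|^2\mathds 1_{B(0,R)}\big)(x)=\int_{B(0,R)}\frac{\Omega\!\big(\tfrac{x-y}{|x-y|}\big)}{|x-y|^3}\,|\phi(y)|^2\,\dd y ,$$
an absolutely convergent integral with no principal value.

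Next I would expand the kernel uniformly. Writing $x-y=R^2u+w$ with $w=v-y$, $|w|\le 2R$, one gets $|x-y|=R^2|u+w/R^2|=R^2\big(1+O(R^{-1})\big)$ and $\tfrac{x-y}{|x-y|}=\tfrac{u+w/R^2}{|u+w/R^2|}=u+O(R^{-1})$, both uniformly in $y\in B(0,R)$ and $|v|\le R$. Hence $R^6|x-y|^{-3}=1+O(R^{-1})$ uniformly, and, $\Omega$ being uniformly continuous on the compact set $\mathbb S^2$, there is $\omega(R)\to 0$ with $\big|\Omega\big(\tfrac{x-y}{|x-y|}\big)-\Omega(u)\big|\le\omega(R)$ for all such $x,y$.

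To conclude I would split
$$\int_{B(0,R)}\frac{\Omega(\tfrac{x-y}{|x-y|})}{|x-y|^3}|\phi(y)|^2\,\dd y=\Omega(u)\int_{B(0,R)}\frac{|\phi(y)|^2}{|x-y|^3}\,\dd y+\int_{B(0,R)}\frac{\Omega(\tfrac{x-y}{|x-y|})-\Omega(u)}{|x-y|^3}|\phi(y)|^2\,\dd y ,$$
bound the second summand in absolute value by $\omega(R)(R^2-2R)^{-3}\int_{\mathbb R^3}|\phi|^2=R^{-6}o(1)$, and, using $(R^2+2R)^{-3}\le |x-y|^{-3}\le(R^2-2R)^{-3}$ together with $\int_{B(0,R)}|\phi|^2\to\int_{\mathbb R^3}|\phi|^2$, bound the first summand by $R^{-6}\big(\Omega(u)+o(1)\big)\int_{\mathbb R^3}|\phi|^2$. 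Since every estimate here is uniform in $|v|\le R$, taking the supremum over $v$ yields the stated bound. The only mildly delicate point is the sign bookkeeping in this last step: when $\Omega(u)<0$ one must bound $|x-y|^{-3}$ from below rather than above and control $\Omega(u)\big(\int_{\mathbb R^3}|\phi|^2-\int_{B(0,R)}|\phi|^2\big)=R^{-6}o(1)$; otherwise the argument is entirely routine, and I do not expect a genuine obstacle.
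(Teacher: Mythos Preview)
Your proof is correct and follows essentially the same route as the paper: both arguments reduce the principal-value convolution to an ordinary integral (since $|x-y|\gtrsim R^2$), then expand the kernel around $\Omega(u)/R^6$ using the uniform continuity of $\Omega$ for the angular part and the elementary bound $|x-y|=R^2(1+O(R^{-1}))$ for the radial part. You are slightly more careful than the paper in spelling out the principal-value issue and the sign bookkeeping when $\Omega(u)<0$, but the underlying estimate is the same.
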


We use the above lemma with $u$ such that $\Omega(u) = \inf \Omega < 0$ and obtain
\begin{align*}
 \int_{\mathbb{R}^3} K \star Q_{1,1}^2 Q_{2,1}^2(\cdot -R^2 u ) \leq \frac{1}{R^6} \Inte Q_{1,1}^2 \Inte Q_{2,1}^2\left(\inf\Omega + o(1)\right).
\end{align*}
Using this in (\ref{eq:energy}) gives
\begin{align*}
E(\lambda_1 + \lambda_2) &\leq \mathcal{E}(Q_1 + Q_2(\cdot- \alpha u)) \\
&\leq  \mathcal{E}(Q_1) + \mathcal{E}(Q_2) +\lambda_1 \lambda_2 \frac{\inf\Omega}{R^6} + \mathcal{O}(e^{-tR}) \\
&\leq E(\lambda_1) + E(\lambda_2) + \lambda_1 \lambda_2\frac{\inf\Omega}{R^6} + \mathcal{O}(e^{-tR}) \\
&< E(\lambda_1) + E(\lambda_2),
\end{align*}
which contradicts (\ref{eq_egalite_min_2_contraintes}). 

We conclude that $\lambda_2 = 0$ and that there exists $\{x_n\}\subset \mathbb{R}^3$ such that $\{\psi_n(\cdot-x_n)\}$ is relatively compact in $L^p(\mathbb{R}^{3})$ for $p\in[2,6)$. We can obtain a minimizer by extracting a converging subsequence. By a classical argument we indeed have strong convergence in $H^1(\mathbb{R}^3)$.

\begin{proof}[Proof of Lemma \ref{lem_dip_pot_negatif}]
We extend the function $\Omega$ to the whole of $\mathbb{R}^3$ by $\Omega(x) = \Omega (x/|x|)$. Let $\omega$ be the modulus of uniform continuity of $\Omega$ in $B(0,2) \setminus B(0,1/2)$. For $u \in \mathbb{S}^2$ we have
\begin{align*}
\alpha^3 \Bigg|\Inte \frac{\Omega(R^2 u + v-y)}{|R^2 u + v-y|^3}& |\phi(y)|^2 dy -  \Omega(u) \; \frac{\int_{\mathbb{R}^3} |\phi|^2 }{R^6} \Bigg| \\
&\leq
\left| \int_{B(0,R)} \frac{\Omega(u + (v-y)/R^2) - \Omega(u)}{|u + (v-y)/R^2|^3} |\phi(y)|^2 dy\right| \\
& \qquad\qquad\qquad\qquad
+ \left|  \Inte \Omega(u)\left(\frac{1}{|u + (v-y)/R^2|^3} - 1\right) |\phi(y)|^2 dy \right| \\
&\leq
\left(\frac{\omega(2R^{-1})}{(1-2R^{-1})^3} + 2 \|\Omega \|_{L^\infty(\mathbb{S}^2)} R^{-1} \frac{3}{\left(1-2 R^{-1}\right)^4}\right)\int_{\mathbb{R}^3} |\phi|^2 \to 0,
\end{align*}
as $R \to \infty$, which concludes the proof.
\end{proof}

\subsection{Non-existence of minimizers for $0 <\lambda < \lambda_c$}
Assume there is a solution $Q$ to the minimization problem for $0<\lambda < \lambda_c$, then by definition of $\lambda_c(b)$ we have $E(\lambda) = 0$. Moreover, $Q$ satisfies the Euler-Lagrange equation (\ref{eq:EL}) with some chemical potential $\mu < 0$. Then, the same computation as in (\ref{eq:derivee_negative}) shows that $\lambda' \mapsto E(\lambda',b)$ is decreasing in a neighborhood of $\lambda$ contradicting the fact that $\lambda < \lambda_c$

\subsection{Existence of minimizer, for $\lambda = \lambda_c$}

For the existence in the case $\lambda = \lambda_c$, we take a sequence $\lambda_n \to \lambda$ as $n\to\infty$ with $\lambda_n > \lambda$ for all $n\geq 1$ and consider minimizers for the problem with mass constraint $\lambda_n$: for all $n\geq 1$ we take some $\psi_n \in H^1(\mathbb{R}^3)$ with $\int_{\mathbb{R}^3}|\psi_n|^2 = \lambda_n $ and $\mathcal{E}(\psi_n) = E(\lambda_n,b)$. For all $n$, $\psi_n$ verifies the Euler-Lagrange equation (\ref{eq:EL}) for some $\mu_n < 0$. We will use Lemma \ref{lem_rel_compact}, assume first that $\m(\{\psi_n\}) = 0$, then, with the notations of Lemma \ref{lem_virial}, $I(\psi_n) \to 0$ and $Q(\psi_n) \to 0$ from which we deduce, by Lemma \ref{lem_virial}, that $T(\psi_n) \to 0$. But by the Sobolev inequality we have
\begin{equation*}
| I(\psi_n) | \leq C \|\psi_n\|_{L^2(\mathbb{R}^3)} \|\nabla \psi_n\|_{L^2(\mathbb{R}^3)}^{3} = C \lambda_n^{1/2} T(\psi_n)^{3/2},
\end{equation*}
hence $I(\psi_n) = o(T(\psi_n))$ and $\mathcal{E}(\psi_n)\geq 0$. This contradicts the fact that $E(\lambda_n,b) < 0$ for all $n$. This proves that $\m(\{\psi_n\}) >0$, and we can show as before that there is some $\lambda_1 \leq \liminf \{\lambda_n\} = \lambda_c$ such that there is some solution $Q_1$ to the minimization problem with constraint $\lambda_1$. But since there is no minimizer on $[0,\lambda_c)$, we necessarily have $\lambda_1 = \lambda_c$.

\subsection{Upper bound on $\lambda_c(b)$ in the dipolar case}

We now prove the upper bound (\ref{eq:domain_admissible_par_upper_bound}) in the dipolar case. Following the computations done in \cite{Bisset-16}, we use a Gaussian ansatz. We take $n=e_z$ in the definition of $K_{\textrm{dip}}$. For $\lambda, \sigma_\rho,\sigma_z >0$ define
\begin{equation*}
\psi_{\sigma_\rho,\sigma_z}^\lambda =  \sqrt{\frac{8\lambda}{\pi^{3/2} \sigma_\rho^2 \sigma_z}}\, e^{-2\left(\tfrac{\rho^2}{\sigma_\rho^2} + \tfrac{z^2}{\sigma_z^2} \right)},
\end{equation*}
 where $(\rho,z)$ are the cylindrical coordinates. The normalization is such that $\|\psi^\lambda_{\sigma_\rho,\sigma_z}\|_{L^2(\mathbb{R}^3)}^2 = \lambda$. We have
 \begin{equation*}
\frac{\mathcal{E}(\psi_{\sigma_\rho,\sigma_z}^\lambda)}{2\lambda} = \frac{2}{\sigma_\rho^2} + \frac{1}{\sigma_z^2} - \frac{\lambda }{2^{1/2} \pi^{3/2} \sigma_\rho^2 \sigma_z}\left(b f\bigg(\frac{\sigma_{\rho}}{\sigma_z}\bigg) - 1\right) +\frac{ 2^6 \lambda^{3/2} }{5^{5/2} \pi^{9/4} \sigma_\rho^3 \sigma_z^{3/2}} ,
 \end{equation*}
 where 
 \begin{equation*}
 f(x) = \frac{1+2x^2}{1-x^2} - 3x^2 \frac{\tanh^{-1}(\sqrt{1-x^2})}{(1-x^2)^{3/2}}.
 \end{equation*}
 Denoting $\alpha = \sigma_\rho /\sigma_z$ and $Y = \sigma_\rho^{-2}$ we define
 \begin{equation*}
 F_{2}(\lambda,Y,\alpha) = (2+\alpha^2) Y - \lambda Y^{3/2} \frac{\alpha }{2^{1/2} \pi^{3/2}}\left(b f(\alpha) - 1\right) + \lambda^{3/2} Y^{9/4} \alpha^{3/2} \frac{2^6}{5^{5/2} \pi^{9/4}}.
 \end{equation*}
We define
 \begin{equation*}
\lambda_1(\alpha) = \sup \left\{ \lambda > 0 \,  \big| \,  F_{2}(\lambda,Y,\alpha) \geq 0, \forall \, Y \geq 0 \right\}
\end{equation*}
which by the form of $ F_{2}$, exists, is unique and satisfies the following system
 \begin{equation*}
 \left\{
 \begin{array}{ll}
\quad \quad F_{2}(\lambda_1(\alpha),Y_0,\alpha) &= 0, \\
(\partial_Y F_{2})(\lambda_1(\alpha),Y_0,\alpha) &= 0,
 \end{array}
 \right.
\end{equation*}
for some $Y_0 >0$. Solving this system gives
\begin{align*}
\lambda_1(\alpha) = \frac{\pi^{3/2}2^{19/12}}{3^{3/2}} \frac{(2+\alpha^2)^{3/2}}{\alpha \left( b f(\alpha) - 1\right)^{5/2}} .
\end{align*}
Optimizing over $\alpha$ one can obtain numerically
\begin{equation*}
\inf_{\alpha >0} \lambda_1(\alpha) \leq 84.437  \frac{1}{(b-1)^{5/2}}.
\end{equation*}
\appendix
\section{Proof of Lemma \ref{lemma_reg_minimizers}}
	\label{app:Lemma_regularity}

Let $\psi$ be a minimizer of $\mathcal{E}$ for the mass constraint $\lambda >0$. By convexity of the gradient \cite[Theorem 7.8]{LieLos-01} we have $\mathcal{E} (\psi) \geq \mathcal{E}(|\psi|)$. Hence $|\psi|$ also minimizes $\mathcal{E}$ and satisfies the Euler-Lagrange equation (\ref{eq:EL}), therefore, without loss of generality we can assume $\psi \geq 0$.

\paragraph{\textbf{Regularity and positivity}}
Following the proof of \cite[Lemma A.5]{LieSeiYng-00}, for $t>0$, we rewrite (\ref{eq:EL}) as 
\begin{equation}
	\label{eq:EL_Yukawa}
(-\Delta + t^2) \psi = -\left(W - \mu - t^2\right)\psi,
\end{equation}
with $W = ( \psi^2 + b K\star \psi^2 +  \psi^3)$. Solving the equation we find
\begin{equation}\label{Yukawa_resolution}
\psi(x) := - \int_{\mathbb{R}^{3}} Y_t(x-y)\left(W(y) - \mu - t^2\right)\psi(y) dy,
\end{equation}
where $Y_t (x) = (4\pi|x|)^{-1}e^{-t|x|}$ is the Yukawa potential. We want to prove that $-\Delta \psi \in L^2(\mathbb{R}^3)$, by the Euler-Lagrange equation it suffices to show that $\psi \in L^8(\mathbb{R}^3)$. But since $\psi \in H^1(\mathbb{R}^3)$ we have $W \psi \in L^2(\mathbb{R}^3)$ and deduce from the Euler-Lagrange equation that $-\Delta \psi \in L^2(\mathbb{R}^3)$. The usual bootstrapping argument and (\ref{eq:lp_continuity}) give $\psi \in H^{k}(\mathbb{R}^3)$ for all $k\geq 1$ and hence $\psi \in C^{\infty}(\mathbb{R}^3)$. 

In particular we have $-\Delta \psi \in L^{\infty}(\mathbb{R}^3)$, using that $\psi \geq 0$ and applying Harnack's inequality \cite[Theorem 8.20]{GilTru} we deduce that if $\psi$ vanishes then $\psi = 0$ which is excluded by assumption ($\mathcal{E}(\psi) < 0$), hence $\psi >0$.

\medskip

\paragraph{\textbf{Exponential decay}}
Let $0 < t^2 < -\mu /2$. Before proving the exponential decay, we first need to show that $K \ast |\psi|^2$ vanishes at infinity in the sense that
\begin{equation}\label{prop_tends_to_zero}
\forall \varepsilon >0, \,  \exists R_\varepsilon>0 \, \textrm{ s.t. } \sup_{|x|>R_\varepsilon}\{ K\star |\psi|^2 (x)\} < \varepsilon.
\end{equation}
Let $\varepsilon >0$, by the cancellation property (\ref{eq:def_K}) we can write
\begin{gather*}
| K\star \psi^2(x)| = \left| \int_{\mathbb{R}^{3}} \frac{\Omega((x-y)/|x-y|)}{|x-y|^3} \left(\psi(y)^2 -\psi(x)^2\right) dy \right| \\
\leq \int_{B(x,\varepsilon)} \frac{|\Omega((x-y)/|x-y|)|}{|x-y|^{3-\alpha}} \|\psi\|_{L^\infty(\mathbb{R}^3)} \|\psi\|_{C^{0,\alpha}(\mathbb{R}^3)} + 
\int_{B(x,\varepsilon)^c} \frac{|\Omega((x-y)/|x-y|)|}{|x-y|^{3}} \psi(y)^2.
\end{gather*}
Now using that $\Omega\in L^\infty(\mathbb{S}^2)$ and $\psi^2\in L^1(\mathbb{R}^{3})\cap L^\infty(\mathbb{R}^{3})$ we have for all $x\in B(0,R)$, for any $R>0$,
\begin{align}\label{ineq_reg_1}
| K\star \psi^2(x)|
&\leq C\varepsilon \|\Omega\|_{L^\infty(\mathbb{S}^2)} \|\psi\|_{L^\infty(\mathbb{R}^{3})} \|\psi\|_{C^{1}(B(x,1))} + \int_{B(x,\varepsilon)^c} \frac{|\Omega((x-y)/|x-y|)|}{|x-y|^{3}} \psi(y)^2. \nonumber 
\end{align}
Note that the second term above tends to zero at infinity at fixed $\varepsilon$ because, for instance, $\mathds{1}_{|x|>\varepsilon} |x|^{-3}, \psi^2 \in L^2(\mathbb{R}^3)$, this proves that $K\star |\psi|^2$ tends to zero at infinity. Hence $W$ vanishes at $\infty$ and from (\ref{eq:EL_Yukawa}) we deduce that
\begin{equation*}
(-\Delta + t^2) \left(\psi - C Y_t\right) \leq 0, \qquad \textrm{in} \quad B(0,R)^c,
\end{equation*}
for $R$ large enough and $C>0$. Since $\psi$ is bounded, we can find $C>0$ so that $\psi - C Y_t \leq 0$ in $B(0,R)$ (and in particular on $\partial B(0,R)$), applying the maximum principle shows that $\psi \leq C Y_t$, in the whole $\mathbb{R}^3$ and conclude the proof.

\bibliographystyle{siam} 
\bibliography{biblio} 
\end{document}